\documentclass[article]{ifcolog-c}
\usepackage{lineno,hyperref}
\modulolinenumbers[5]

\newcommand{\putaway}[1]{}

\usepackage{amsthm}

\usepackage{amsmath,amssymb,tikz,bbm,mathrsfs}
\usetikzlibrary{arrows,decorations.pathmorphing,backgrounds,positioning,fit,petri}

\newtheorem{theorem}{Theorem}[section]
\newtheorem{lemma}[theorem]{Lemma}

\newtheorem{definition}[theorem]{Definition}
\newtheorem{exam}[theorem]{Example}

\newtheorem{remark}[theorem]{Remark}

\usepackage{nicefrac,amsmath,amssymb}
\usepackage{amsfonts}
\usepackage{bm}
\usepackage[inline]{enumitem}
\usepackage{stmaryrd,multicol,tikz,float}

\newcommand{\rel}{\sqsubset}

\newcommand{\simrel}{\sim}
\newcommand{\rrel}{\sqsubseteq}

\def\lb{\left\llbracket}
\def\rb{\right\rrbracket}
\newcommand{\val}[1]{\lb #1 \rb}

\newcommand{\nx}{\mathsf X}
\newcommand{\y}{\mathsf Y}

\newcommand{\eventuallyF}{\mathsf F}
\newcommand{\eventuallyP}{\mathsf P}

\newcommand{\M}{\ensuremath{\mathcal{M}}}
\newcommand{\tuple}[1]{\ensuremath{\langle #1\rangle}}
\usepackage{tikz}
\usetikzlibrary{arrows.meta,shapes,positioning,calc,automata,quotes,patterns}
\usepackage{nicefrac}

\begin{document}

\title{Common Knowledge Always, Forever}

\titlerunning{Common Knowledge Always, Forever}

\titlethanks{
This work was partially supported by the project PID2023-149556NB-I00 (Spanish Ministry of Scence and Innovation) and the project CTASP (\'etoiles montantes Pays de la Loire region, France). 
}

\addauthor[martin.dieguezlodeiro@univ-angers.fr]{Martín Diéguez}
{University of Angers}
\addauthor[fernandez-duque@ub.edu]{David F\'ernandez-Duque}
{Department of Philosophy, University of Barcelona}
\authorrunning{M. Diéguez and D. Fernández-Duque}





\maketitle

\begin{abstract}
There has been an increasing interest in topological semantics for epistemic logic, which has been shown to be useful for, e.g., modelling evidence, degrees of belief, and self-reference.
We introduce a polytopological PDL capable of expressing common knowledge and various generalizations and show it has the finite model property over closure spaces but not over Cantor derivative spaces.
The latter is shown by embedding a version of linear temporal logic with `past', which does not have the finite model property.
\end{abstract}



\section{Introduction}

It is no overstatement to say that Andreas Herzig has been a profound influence on the careers of many young researchers, including the authors during their time spent at Toulouse. He has been a mentor in the truest sense of the word, offering not only guidance on technical matters but also valuable insights on life and career choices.
Over the two years we spent there, we had the privilege of engaging in insightful discussions on a wide range of scientific topics and forging a lasting friendship. 

It is thus an honour to contribute to this volume on the occasion of his 65th birthday, and only natural to combine common knowledge~\cite{HerzigCommon} and propositional dynamic logic~\cite{HerzigPDL}, two topics close to Andreas' interests, with recent developments of the authors and collaborators in topological interpretations of dynamic logics~\cite{BoudouDF22}.

Topological semantics is in fact older than Kripke semantics and can be traced back to McKinsey and Tarski \cite{MT44}.
When the modal $\Diamond$ is interpreted as topological closure and the modal $\Box$ as topological interior, one obtains a semantics for the modal logic $\sf S4$ and its extensions, generalizing Kripke semantics over transitive, reflexive frames.
The logic of all topological spaces in this semantics is $\sf S4$; see~\cite{vBB07} for a nice overview.

McKinsey and Tarski also suggested a second topological semantics, obtained by interpreting the modal $\Diamond$ as Cantor derivative; recall that the derivative $d(A)$ of a set $A$ consists of all limit  points of $A$.
Esakia \cite{Esakia01, Esakia04} showed that the derivative logic of all topological spaces is the modal logic $\mathsf{wK4} = \mathsf{K} + (\Diamond \Diamond p \to \Diamond p\vee p$).  This is also the modal logic of all {\em weakly transitive} frames, i.e.~those for which the reflexive closure of the accessibility relation is transitive.
It is well-known that the modal logic of transitive frames is $\mathsf{K4}$ \cite{BdRV01, CZ97}, which moreover corresponds to a natural class of topological spaces denoted $T_D$.

It has been argued that topology is a natural framework for modelling epistemic attitudes~\cite{SurpriseKR,balbezozgsme17,BaltagGS16,BaltagBOS16}.
In this paper, we propose a multi-agent extension of this framework with $\sf PDL$-style operators allowing for the formation of cooperative knowledge operations including, but not limited to, common knowledge.
We show that under the better-understood variants of this framework, the resulting logics are decidable, but things can get tricky when combining more than one agent with Cantor derivative.

\paragraph{Layout} In Section~\ref{DerivSpace} we introduce the definition of derivative spaces.
In Section~\ref{sec:syntax-and-semantics} we present the syntax and the semantics of topological $\sf PDL$.
In Section~\ref{SecDec} we show that the validity problem in topological $\sf PDL$ is decidable 
when considering two specific classes of derivative spaces: the class of topological closure and the class of topological derivative spaces.
In Section~\ref{sec:embedding} we prove that topological $\sf PDL$ does not enjoy the finite model property (FMP) in the general case.
To do that, we show that we can embed $\sf LTL$ with past (which does not posses the FMP) into topological $\sf PDL$.
We finish the paper with some concluding remarks.

\section{Derivative spaces}\label{DerivSpace}

Although our primary focus in this paper is $\sf PDL$ on topological spaces, for technical reasons it is useful to consider a slightly more general class of structures which will allow us to unify topological and Kripke semantics.

\begin{definition}\label{derDS}
A \emph{derivative space} is a pair $( X ,d)$, where $X$ is a set of `points', and $d:2^X\to 2^X$ is an operator on subsets of $X$, satisfying the following properties, for all $A,B\subseteq X$:
\begin{itemize}
\item $d(\varnothing)=\varnothing$;
\item $d(A\cup B)=d(A)\cup d(B)$;
\item $d(d(A))\subseteq A\cup d(A)$.
\end{itemize}
The conjunction of the first two conditions above is known as \emph{normality}, while the third condition is \emph{weak idempotence}.
\end{definition}

The notion of a derivative space is the concrete set-theoretic instantiation of the more abstract concept of a \emph{derivative algebra}, introduced by Esakia \cite{Esakia04} (as a generalization of a notion with the same name introduced by McKinsey and Tarski \cite{MT44}).

The primary examples of derivative spaces come from topological spaces, i.e.~pairs $(X,\mathcal T)$, where $\mathcal T$ is a collection of subsets of $X$ called {\em open sets} such that $\varnothing,X\in\mathcal T$ and $\mathcal T$ is closed under arbitrary unions and finite intersections.
If $x\in U\in\mathcal T$, we say that $U$ is a {\em neighbourhood of $x$,} and we denote the set of neighbourhoods of $x$ by $\mathcal N(x)$. 
The complements of open sets are called {\em closed sets.}
Often the topology $\mathcal T$ is presented in terms of a {\em base,} which is a collection of open sets $\mathcal B$ such that whenever $U$ is a neighbourhood of $x$, there is $B\in \mathcal B$ with $x\in B\subseteq U$.

\begin{exam}[topological closure spaces]
A special case of derivative spaces is given by \emph{closure spaces}:
these are derivative spaces $(X, c)$ that additionally satisfy $A\subseteq c(A)$ (and, {\em a fortiori,} $c(c(A))\subseteq c(A)$), for every $A\subseteq X$.
These strengthened conditions are known as the {\em Kuratowski axioms,} which define topological spaces in terms of their closure operator.

Given a closure space, let  $A\subseteq X$ be {\em closed} whenever $A=c(A)$, and {\em open} whenever its complement is closed.
This gives us a topological space as defined above.
Conversely, every topology $\mathcal T$ on $X$ gives rise to an operator $c_\mathcal T$ where $c_\mathcal T(A)$ is the set of $x\in X$ so that, for every neighbourhood $U$ of $x$, $A\cap U\neq\varnothing$.
These two operations are inverses of each other, so closure spaces are exactly the same notion as topological spaces.

When considered as a special case of derivative spaces, with $d(A):=c(A)$ given by topological closure, topological spaces will be called \emph{topological closure spaces}.
\end{exam}

\begin{exam}[topological derivative spaces]
The prototypical example of derivative spaces are structures
$(X,d)$, based on an underlying topological space $(X,\mathcal T)$, but with the derivative operator given by the so-called \emph{Cantor derivative}, i.e.~by taking $d(A)$ to be the \emph{set of limit points} of $A$:
\begin{align*}
d_\mathcal T(A) & :=\, \{y\in X: y\in c_\mathcal T(A-\{y\})\} \\
&= \{y\in X:\forall U\in \mathcal{N}(y)\ A\cap (U-\{y\})\not=\emptyset\}.
\end{align*}
It is not hard to show that  $(X,d_\mathcal T)$ is a derivative space, which we will refer to as a \emph{topological derivative space}.
The closure operator can be recovered as $c_\mathcal T(A) = A\cup d_\mathcal T(A)$.
\end{exam}

When clear from context, we may omit the subindex $\mathcal T$ and write $c$, $d$ for $c_\mathcal T$, $d_\mathcal T$.
However, not every derivative space is of this form; another large family of examples comes from Kripke frames.

\begin{exam}[weakly transitive Kripke frames] A \emph{weakly transitive frame} (or {\em $\sf wK4$ frame}) is a Kripke structure $(W,\rel)$, consisting of a set of `states' (or `possible worlds') $W$, together with a binary relation ${\rel}\subseteq W\times W$ (known as an `accessibility' or `transition' relation), assumed to be \emph{weakly transitive}: i.e., for all states $w,s,t\in W$, if $w \rel s\rel t$ then $w \rrel t$.	
Given $A \subseteq W$, we set $A{\downarrow}:= \lbrace w\in W: \exists s \, w\rel s \in A\rbrace$ and $A{\Downarrow}:=  \lbrace w\in W: \exists s \, w\rrel s \in A\rbrace$. 

We write $w\simrel v $ if $w\rrel v\rrel w$.
It is easy to see that every weakly transitive frame gives rise to a derivative space $(W,d_{\rel})$, obtained by taking the derivative $d_{\rel}$ to be usual modal `diamond' operator:
\[d_{\rel}(A) := A{\downarrow} = \{w\in W: \exists s \, w\rel s \in A\}.
\]
Moreover, the induced closure $c_{\rel}(A)$ (as defined above in arbitrary de\-ri\-va\-tive spaces) is given by
$c_{\rel}(A)\, = \, A\cup d_\rel(A).$
\end{exam}

In general, weakly transitive frames are \emph{not} topological derivative spaces.
But the intersection of the two classes is of independent interest, as shown by the next two examples:

\begin{exam}[Alexandroff closure spaces as ${\sf S4}$ Kripke frames]\label{ex: S4}
A topological space $\mathcal{X}=(X,\mathcal T)$ is \emph{Alexandroff} if its closure operator distributes over arbitrary unions: $c(\bigcup_i A_i)=\bigcup_i c(A_i)$.
Given $x,y\in X$, define $x\rel y$ if $x\in c\{y\}$.
Then, $\rel$ is reflexive and transitive and one can check that if $\mathcal X$ is Alexandroff, the relational derivative coincides in this case with the topological closure: $d_{\rel}=c$.
As it is well known, the converse also holds: every ${\sf S4}$ frame $(X,\rel)$ gives rise to an Alexandroff closure space, by putting $c_{\rel}(A):=A{\downarrow}=A{\Downarrow}$ for the closure/derivative operator.
Thus, Alexandroff closure spaces can be identified with $\sf S4$ Kripke frames.
\end{exam}

\begin{exam}[Alexandroff derivative spaces as irreflexive $\sf wK4$ frames]\label{AlexDerivative}
Another way to convert an Alexandroff space $(X,\mathcal T)$ into a relational structure is to define $x\rel y$ if $x\in d\{y\}=c\{y\} - \{y\}$, for all $x,y\in X$.
Then $\rel$ is weakly transitive and irreflexive, and the relational derivative $d_{\rel}$ coincides in this case with the Cantor derivative induced by $\mathcal T$.
Conversely, every irreflexive $\sf wK4$ frame $(X,\rel)$ gives rise to an Alexandroff derivative space $(X,d_\rel)$.
So, Alexandroff topological derivative spaces are essentially the same as irreflexive $\sf wK4$ frames.
\end{exam}

Not every weakly transitive frame is irreflexive, but it is well known that every weakly transitive frame is a p-morphic image of an irreflexive weakly transitive frame which has at most twice as many points (see e.g.~\cite{TopoMuJACM}).

Here we may ask if $\sf S5$, the traditional foundation for epistemic logic, may also be regarded as a topological logic.
In fact this is the case, although we must consider a restricted class of spaces.

\begin{exam}[Monadic spaces]
If $x$ is a point in an Alexandroff space then $x$ has a least neighbourhood $U$ (the intersection of all of its neighbourhoods), but $U$ itself may contain smaller non-empty open sets, albeit not containing $x$.
If this is not the case, i.e.~if $U\neq\varnothing$ and there is no open set $U'$ such that $\varnothing \subsetneq U'\subsetneq U$, we say that $U$ is an {\em atomic open set.}
Most topological spaces (including many Alexandroff spaces) do not contain atomic opens, but {\bf monadic spaces} are a special case of Alexandroff spaces that have a base consisting of atomic opens.
Alternately, these may be characterised by the property that every open set is closed.
However, they are more familiar via their relational representation, in which $\sqsubset$ is an equivalence relation and thus we may instead write it $\sim$; in other words, monadic spaces are basically $\sf S5$ frames.
As with other spaces, these may be equipped with their closure or Cantor derivative operator, yielding two derivative spaces.
\end{exam}

\section{Syntax and Semantics}\label{sec:syntax-and-semantics}

Our goal is to consider a multi-agent extension of topological modal logics.
In order to capture notions such as common knowledge, we employ a $\sf PDL$-like syntax, given by the following language $\mathcal L^*$:
\begin{eqnarray*}
\varphi,\psi &:= & p \ \ \mid \ \ \neg\varphi \ \ \mid \ \  \varphi\wedge\psi \ \ \mid \ \ \langle \alpha\rangle \varphi\\
\alpha,\beta &:=&  a \ \ \mid \ \ \alpha;\beta  \ \mid \ \  \alpha \cup \beta \ \ \mid \ \  \alpha^*
\end{eqnarray*}

\noindent Here, $p$ ranges over atomic propositions from a set $\mathbb P$ and $a$ over `agents' in a set $\mathbb A$.
The operation $\alpha ; \beta$ is composition, $\alpha\cup \beta$ is union and $\cdot^\ast$ is iteration.
Note that we only think of atomic programs as agents; complex programs may be viewed as communication tasks or aggregated notions of knowledge.
The classic example of this is common knowledge, which may be defined by $C_{a_1\ldots a_n}\varphi=[(a_1\cup \ldots\cup a_n)^*]\varphi$.

For semantics, each agent $a$ is endowed a derivative operator $d_a$ over some set $X$.
A structure $(X,(d_a)_{a\in\mathbb A})$ is a {\em polyderivative space,} and we inherit terminology from the single-agent setting, so that e.g.~a {\em polytopological closure space} is one where each $d_a$ is the closure operator of a topology $\mathcal T_a$ on $X$ (in which case, we may write $c_a$ instead of $d_a$).
If the operator $d_a$ is either the closure or derivative of some topology $\mathcal T_a$, we refer to elements of the latter as {\em $a$-open} sets.

We then assign an operator $\val \alpha \colon 2^X\to 2^X$ to each program inductively by
\begin{itemize}

\item $\val a (Y)= d_a(Y)$

\item $\val {\alpha;\beta}(Y) = \val\alpha  ( \val \beta (Y))$

\item $\val{\alpha \cup \beta}(Y) = \val \alpha (Y) \cup \val \beta (Y)$

\item $\val{\alpha^*}(Y) = \bigcap \{Z : \val\alpha (Z) \cup Y \subseteq Z  \} $.

\end{itemize}
\noindent 
A {\em polyderivative model} is a structure $\mathcal M = (X,(d_a)_{a\in\mathbb A},\val\cdot)$, consisting of a polyderivative space equipped with a valuation $\val\cdot\colon \mathbb P \to 2^X$.
The truth set $\val\varphi$ of a formula $\varphi$ is then defined in the usual way, with $\val{\langle\alpha\rangle\varphi} = \val{\alpha}(\val\varphi)$.
As usual, we write $(\mathcal M,x )\models\varphi$ if $x\in\val\varphi$.

Note that $\val{\alpha^*}(Y)$ is the least superset of $Y$ which is a fixed point under $\val\alpha $.
In relational structures this will yield the usual transitive closure, but these operators may be more complex in a topological setting.
In particular, $x\in \val{C_{a_1\ldots a_n}\varphi} $ iff there is $U\subseteq \val\varphi$ which is open in {\em all} topologies $ \mathcal T_{a_i}$ with $x\in U$~\cite{SarenacCommon}.

\section{Decidability Results}\label{SecDec}

As long as we restrict our attention to `well-behaved' derivative spaces, our topological $\sf PDL$ can be embedded into standard $\sf PDL$ and hence inherits some of its nice properties.
Recall that a logic has the {\em effective finite model property} if there is a computable function $f$ such that any satisfiable formula $\varphi$ can be satisfied in a model of size at most $f(|\varphi|)$, where $|\varphi|$ is the length of $\varphi$; as long as the model-checking problem is decidable, this yields the decidability of the validity problem, as it suffices to search for a model of $\neg\varphi$ of size at most $f(|\neg\varphi|)$.

\begin{theorem}
$\mathcal L^*$ has the effective finite model property over the class of closure spaces and hence the validity problem for this class is decidable.
\end{theorem}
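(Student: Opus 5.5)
The plan is to reduce satisfiability over polytopological closure spaces to satisfiability of standard $\sf PDL$ over Kripke models in which every atomic program is interpreted by a preorder, and then to invoke the (effective) finite model property of $\sf PDL$. The argument has three steps.

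\textbf{Step 1: finite preorder models are closure spaces.} By Example~\ref{ex: S4}, every $\sf S4$ frame $(W,(\rel_a)_{a\in\mathbb A})$ yields a polytopological closure space via $c_a(A)=A{\downarrow_a}$. One checks by induction on programs that, on such a structure, $\val\alpha$ coincides with the usual relational $\sf PDL$ diamond of $\alpha$. The only nontrivial clause is $\alpha^*$. Here the least $Z\supseteq Y$ with $\val\alpha(Z)\subseteq Z$ is exactly the set of points reaching $Y$ through the reflexive-transitive closure of $R_\alpha$. Indeed, that set is a prefixed point, and it is contained in every prefixed point by induction on path length. Hence a formula satisfiable in a finite $\sf PDL$ model with reflexive, transitive atomic relations is satisfiable in a finite closure space.

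\textbf{Step 2: from a closure space to a $\sf PDL$ constraint.} The key observation is that in any closure space $\val\varphi$ is already determined, and it satisfies the $\sf S4$ axioms for each atomic $a$: $p\to\langle a\rangle p$ and $\langle a\rangle\langle a\rangle p\to\langle a\rangle p$, instantiated at subformulas. I would translate $\varphi$ to $\varphi'=\varphi\wedge[(\bigcup_{a}a)^*]\Theta$. Here $\Theta$ is the conjunction, over all subformulas $\langle\alpha\rangle\psi$ with $\alpha$ ranging over the relevant atomic $a$, of $\psi\to\langle a\rangle\psi$ and $\langle a\rangle\langle a\rangle\psi\to\langle a\rangle\psi$.

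To go from a topological model to a Kripke model, define the canonical-like relational structure on $X$ by $x\,R_a\,y$ iff $x\in c_a(\{y\})$. This is not equivalent in general, since non-Alexandroff spaces differ. The cleaner route is instead a filtration. Take the Fischer--Ladner closure $\Sigma$ of $\varphi$ and let $W$ be the set of $\Sigma$-types realised in $X$. Set $[x]\,R_a\,[y]$ iff for every $\langle a\rangle\psi\in\Sigma$, $y\in\val\psi$ implies $x\in\val{\langle a\rangle\psi}$. This relation is reflexive because $c_a$ is inflationary. To obtain transitivity, take its transitive closure, as in the usual $\sf S4$ filtration, and use $c_ac_a=c_a$ to show that diamonds are still preserved.

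\textbf{Step 3: the truth lemma, and the main obstacle.} The main obstacle is proving the filtration truth lemma for $\alpha^*$, since the topological semantics of $\alpha^*$ is a least fixed point rather than path reachability. For the ``$\Rightarrow$'' direction, I would show that the set $Z$ of points whose type reaches a $\psi$-type via $R_\alpha^*$ in $W$ satisfies $\val\alpha(Z)\cup\val\psi\subseteq Z$. This uses the truth lemma for $\alpha$ applied to the formulas describing types, which is possible since $W$ is finite and each union of types is definable by a Boolean combination of $\Sigma$-formulas. By leastness, $\val{\langle\alpha^*\rangle\psi}\subseteq Z$. For ``$\Leftarrow$'', I would argue by induction on path length, using the fixed-point property $\val\alpha(\val{\langle\alpha^*\rangle\psi})\subseteq\val{\langle\alpha^*\rangle\psi}$.

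The resulting model has at most $2^{|\Sigma|}$ points, with $|\Sigma|$ linear in $|\varphi|$, which gives a computable bound $f$. Decidability then follows by enumerating finite $\sf S4$-type models up to this size and model checking.
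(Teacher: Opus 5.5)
Your overall strategy, filtering the closure model directly into a finite multi-relational $\sf S4$ model and reading it back as a closure space via Step 1, is a legitimate alternative to the paper's route. However, the relation you choose in Step 2 breaks it.

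\textbf{The filtration relation in Step 2 fails.} Your $R_a$ is the largest filtration, and its transitive closure is \emph{not} a filtration. The condition ``$[x]R_a^+[y]$ and $y\in\val\psi$ imply $x\in\val{\langle a\rangle\psi}$'' can fail. The identity $c_ac_a=c_a$ cannot rescue it, because pushing $\langle a\rangle\psi$ back one more step would require $\langle a\rangle\langle a\rangle\psi\in\Sigma$.

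Here is a concrete counterexample. Take the $\sf S4$ frame on $\{u,v,w\}$ with $v\rel_a w$ plus reflexive loops, $\val p=\{w\}$, and $\varphi=\neg\langle a\rangle p$. Then $\Sigma=\{\neg\langle a\rangle p,\langle a\rangle p,p\}$, and the three points have three distinct types. Now $[u]R_a[v]$ holds vacuously (since $v\notin\val p$), and $[v]R_a[w]$ holds, so the transitive closure gives $[u]R_a^+[w]$. As a result $\langle a\rangle p$ holds at every point of the filtered model, and $\varphi$ is satisfied nowhere.

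The usual $\sf S4$ filtration is the Lemmon one (or the transitive closure of the \emph{smallest} filtration). Put $[x]R_a[y]$ iff for every $\langle a\rangle\psi\in\Sigma$, $y\in\val{\langle a\rangle\psi}$ implies $x\in\val{\langle a\rangle\psi}$. This relation is a preorder outright. The upper condition follows from $Y\subseteq c_a(Y)$. The lower condition must be phrased setwise in the topological setting: if $x\in c_a(T)$, where $T$ is the set of points of type $[y]$, then $[x]R_a[y]$. It holds because $T\subseteq c_a(\val\psi)$ gives $x\in c_ac_a(\val\psi)=c_a(\val\psi)$.

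\textbf{The star case in Step 3 is misjustified.} The inclusion you need is $\val\alpha(Z)\subseteq\{x:\exists z\in Z,\ [x]R_\alpha[z]\}$ for $Z$ a union of type-classes. This is not an instance of the truth lemma. For a Boolean combination $\chi$ of $\Sigma$-formulas, $\langle\alpha\rangle\chi$ is in general not in $\Sigma$, and the filtration need not preserve it; only this one-sided inclusion holds. Prove it as a separate claim by induction on $\alpha$:
\begin{itemize}
\item The atomic case uses finite additivity, $c_a(T_1\cup\dots\cup T_k)=\bigcup_i c_a(T_i)$, together with the setwise lower condition.
\item The composition case uses monotonicity of $\val\alpha$ and the fact that the right-hand side is again a union of type-classes.
\item The star case is your fixed-point argument.
\end{itemize}
The truth lemma for $\langle\alpha\rangle\psi\in\Sigma$ then follows from this claim together with your Fischer--Ladner argument for the other direction. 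The $\varphi'$/$\Theta$ construction plays no role and can be dropped.

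\textbf{Comparison with the paper.} With these repairs your proof is self-contained and avoids any appeal to $\sf PDL$ completeness, at the cost of redoing the filtration lemma setwise. The paper instead replaces each $a$ by $a^*$ and notes $\varphi\leftrightarrow\varphi^*$ on closure spaces, since $c_a^*=c_a$. It then uses soundness of the $\sf PDL$ axioms over polytopological spaces, plus completeness and the FMP of $\sf PDL$, to get a finite Kripke model of $\varphi^*$. Finally it passes to the reflexive--transitive closure of that model.
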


\begin{proof}[Proof sketch]
We use a well-known trick of embedding $\sf S4$ in $\sf PDL$ (see e.g.~\cite{GH17}).
Given a formula $\varphi$, define $\varphi^*$ by replacing every instance of an {\em atomic} program $a$ by $a^*$.
For example, if  $\varphi=\langle (a\cup b)^*\rangle q$ then $\varphi^* = \langle (a^*\cup b^*)^*\rangle q$.
Then, for a Kripke model $\mathcal M = (W,(R_a)_{a\in \mathbb A},\val\cdot)$ and $w\in W$, it is readily checked that $(\mathcal M,w) \models \varphi^*$ iff $(\mathcal M^*,w)\models\varphi $, where $\mathcal M^* = (W,(R_a^*)_{a\in \mathbb A},\val\cdot)$; i.e., all accessibility relations are replaced by their transitive, reflexive closures.
We moreover can check that for any polytopological closure model $\mathcal X =( X,(c_a)_{a\in \mathbb A},\val\cdot)$ and $x\in X$,  $ (\mathcal X,x)\models\varphi$ iff $(\mathcal X,x)\models \varphi^*$, basically because $c_a=c_ac_a$ holds on any closure space so also $c^*_a=c_a$.

Now, suppose that $\varphi$ is satisfied on some polytopological closure model $\mathcal X$.
Then, so is $\varphi^*$.
By inspection on any standard axiomatisation for $\sf PDL$ (see e.g.~\cite[Chapter 10]{GoldblattBook}), we see that all axioms and rules are sound for the class of polytopological spaces, hence $\varphi^*$ is consistent.
By Kripke-completeness and FMP of $\sf PDL$, $\varphi^*$ is satisfied on some finite model $\mathcal M$, so that $\varphi$ is satisfied on $\mathcal M^*$, as needed.
\end{proof}

The logic $\sf wK4$ cannot be treated in this way, since the weakly transitive closure is not definable in $\sf PDL$.
However, we can treat the Cantor derivative of $T_D$ spaces, with the caveat that we no longer obtain a true finite model property.
Recall that $T_D$ spaces are those in which $\Diamond\Diamond p\to \Diamond p$ is valid, i.e.~those validating $\sf K4$.

\begin{theorem}
The validity problem for $\mathcal L^*$ over the class of $T_D$ topological derivative spaces is decidable.
\end{theorem}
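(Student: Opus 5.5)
We mimic the proof for closure spaces, now embedding into $\sf PDL$ via transitive closure rather than reflexive-transitive closure. Given $\varphi\in\mathcal L^*$, let $\varphi^+$ be obtained by replacing every atomic program $a$ by $a;a^*$. For a Kripke model $\mathcal M=(W,(R_a)_{a\in\mathbb A},\val\cdot)$ we have $(\mathcal M,w)\models\varphi^+$ iff $(\mathcal M^+,w)\models\varphi$, where $\mathcal M^+$ replaces each $R_a$ by its transitive closure $R_a^+$. This is a routine induction on formulas and programs, since $\val{a;a^*}$ in $\mathcal M$ is the diamond of $R_a^+$.

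The reduction then runs in two directions. First, suppose $\varphi$ is satisfied on a polytopological $T_D$ derivative model $\mathcal X=(X,(d_a)_a,\val\cdot)$. In a $T_D$ space $d_ad_a(Y)\subseteq d_a(Y)$, so $d_a$ is a closure-like operator without reflexivity. Hence
\[\val{a;a^*}(Y)=d_a\bigl(\val{a^*}(Y)\bigr)=d_a(Y).\]
The last equality holds because $\val{a^*}(Y)=Y\cup d_a(Y)$: this set contains $Y$ and is a fixed point, since $d_a(Y\cup d_aY)=d_aY\cup d_ad_aY\subseteq d_aY$. Thus $\varphi$ and $\varphi^+$ have the same truth set in $\mathcal X$, and $\varphi^+$ is satisfiable on $\mathcal X$. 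Next we check that the $\sf PDL$ axioms and rules (normality of each $[\alpha]$, the composition, union and star axioms, and the induction axiom/rule) are sound on arbitrary polyderivative spaces. This holds because every $\val\alpha$ is normal, i.e.\ preserves $\varnothing$ and finite unions, and $\val{\alpha^*}$ is a least fixed point. Normality of $\val{\alpha^*}$ needs a brief argument; monotonicity together with the least-fixed-point characterisation gives the induction principle. So $\varphi^+$ is $\sf PDL$-consistent, and by completeness and the FMP of $\sf PDL$ it has a finite Kripke model $\mathcal M$. Then $\varphi$ holds in $\mathcal M^+$, a finite model with transitive relations.

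For the converse, we must turn a finite transitive Kripke model into a $T_D$ derivative space. As remarked earlier, a weakly (in particular, fully) transitive frame is a p-morphic image of an irreflexive weakly transitive frame. This irreflexive frame must be taken transitive, not merely weakly transitive: since $\langle\alpha\rangle$-truth is preserved under p-morphisms agentwise, the unravelling must be done simultaneously for all agents. \textbf{This is the main obstacle.} For several agents, the usual trick of duplicating reflexive points (or clusters) for one relation must not break the others. I would first try replacing each point $w$ by $w\times\mathbb Z$ (or $\omega$), with $(w,i)R_a'(v,j)$ iff $wR_av$ and, when $w=v$ or $w,v$ lie in a common $R_a$-cluster, $i<j$ (otherwise arbitrary). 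Each $R_a'$ is then irreflexive and transitive, and the projection is a p-morphism for every agent at once. The resulting structure is an Alexandroff derivative space (Example~\ref{AlexDerivative}) for each $a$, and it is $T_D$ because transitivity validates $\Diamond\Diamond p\to\Diamond p$. This explains the caveat: the model is infinite, so we do not get a genuine finite model property. Verifying that the star operators are preserved under this simultaneous p-morphism is routine, using that the $\val\alpha$ are definable by regular paths in Kripke frames.

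Putting these together, $\varphi$ is satisfiable over $T_D$ derivative spaces iff $\varphi^+$ is $\sf PDL$-satisfiable, which is decidable. Hence validity over $T_D$ derivative spaces is decidable.
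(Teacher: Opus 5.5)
Your proof is correct and follows the paper's route. You translate each atomic $a$ to its transitive closure ($a;a^*$ is exactly the paper's $a^+$), use soundness of $\sf PDL$ on polyderivative spaces together with completeness and the finite model property of $\sf PDL$ to obtain a finite transitive multirelational model, and then pass along a surjective p-morphism to an irreflexive transitive, hence $T_D$ Alexandroff, model. The only difference is that last step: the paper uses a tree unwinding into sequences, while you blow each point up into a $\mathbb Z$-chain ordered only inside $R_a$-clusters. Your construction checks out (irreflexive, transitive, back condition met with $j=i+1$) and treats all agents at once on the shared domain $W\times\mathbb Z$.
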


\begin{proof}[Proof sketch]
We begin as above, but instead use $\varphi^+$ which replaces $a$ by its transitive (but not reflexive) closure, $a^+$, to show that $\varphi$ is satisfiable if and only if it is satisfied on an effectively bounded $\sf K4$ multirelational Kripke model $\mathcal M=(W,(\sqsubset_a)_{a\in\mathbb A},\val\cdot)$.

Note that $\mathcal M$ may not be irreflexive and hence not correspond to a topological derivative, but by a standard tree unwinding (see e.g.~\cite{CZ97}), we may obtain a bisimilar irreflexive frame.
Namely, we obtain a new model $\mathcal M^+=(W^+,(\sqsubset^+_a)_{a\in\mathbb A},\val\cdot^+)$, where $W^+ $ is the set of non-empty sequences of elements of $W$, with $ \ell (\vec w)$ denoting the last element of $\vec w$.
We set $\vec w\sqsubset^+_a\vec v$ iff $\vec w$ is a proper initial segment of $\vec v$, and $\vec w \in \val p^+ $ iff $\ell(\vec w)\in \val p$.
Then it is easy to check that $\sqsubset^+_a$ is transitive and irreflexive for all $a$, thus generating a $T_D$ topology on $W^+$, and $\ell\colon W^+\to W$ is a surjective p-morphism.
\end{proof}


\section{Embedding $\sf LTL$}\label{sec:embedding}

It is not surprising that our logic does not have the finite model property over topological spaces, since this is already the case for the single-agent $\sf K4$; even the formula $\Diamond\top\wedge\Box\Diamond \top$ has no finite $T_D$ topological derivative models.
However, we can get around this by allowing for transitive Kripke models, which do have the finite model property and give rise to the same logic.
In a sense, the situation is even simpler for the class of all topological derivative spaces: their logic, $\sf wK4$, also has the finite model property, and since every finite weakly transitive frame is bisimilar to a finite, irreflexive, weakly transitive frame, it also has the finite model property for the class of `true' topological derivative spaces.
Moreover, even the full $\mu$-calculus enjoys the finite model property over this class~\cite{TopoMuJACM}.
It is thus natural to ask if this remains true for our polytopological $\sf PDL$.

The answer, it turns out, is negative.
In order to prove that topological derivative $\sf PDL$ does not enjoy the finite model property in general, we present an embedding of a version of $\sf LTL$ with past ($\sf PLTL$), which lacks the finite model property. We first introduce the syntax and the semantics of this version of $\sf PLTL$.
Its language, $\mathcal L^{\sf PLTL}$, is given by the BNF grammar

\begin{equation*}
\varphi,\psi ::= p \mid  \neg \varphi \mid \varphi \wedge \psi \mid  \nx\varphi \mid \y \varphi \mid \eventuallyF \varphi \mid \eventuallyP \varphi,
\end{equation*}

\noindent where $p$ belongs to a set propositional variables  $\mathbb P$. Here, $\nx$ is read as `ne\textbf xt', $\y$ as `\textbf yesterday, $\eventuallyF \varphi$ as `\textbf future' and 
$\eventuallyP \varphi$ as `\textbf past'.

We are specifically interested in structures where both past and future are unbounded and linear, which forces the transition relation to be a bijective function.
A \emph{bijective frame} is thus a structure $\mathcal{F} = (X,S)$, with $S: X \to X$ a bijection. By $S^{-1}$ we will denote the \emph{inverse} function of $S$. 
In the standard way, given $w \in X$, we define $S^0(w) = w$ and, for all $k \ge 0$, $S^{k+1}(w) = S(S^k(w))$ (resp. $S^{-(k+1)}(w) = S^{-1}(S^{-k}(w))$).   
A $\sf PLTL$ model is a tuple $\M=\tuple{\mathcal{F},V}$, where $\mathcal{F}$ is a bijective frame and $\val\cdot : \mathbb P \to 2^X$ is a valuation. 
The \emph{satisfaction} of a temporal formula $\varphi$ at $w \in X$ is defined recursively below

	\noindent\begin{enumerate}[itemsep=0pt]
		\item $(\M, w) \models p $  iff $w\in\val p $;
		\item $(\M, w) \models \neg \varphi$ iff $(\M, w) \not \models \varphi$;
		\item $(\M, w) \models \varphi\wedge \psi$  iff $ (\M, w) \models \varphi $ and $(\M, w) \models \psi$;
		\item $(\M, w) \models \nx \varphi $   iff $ (\M, S(w)) \models \varphi $;
		\item $(\M, w) \models \y \varphi $   iff $ (\M, S^{-1}(w)) \models \varphi $;
		\item $(\M, w) \models \eventuallyF \varphi$ iff there exists $k \ge 0$ s.t.  $(\M, S^{k})(w) \models \varphi$;	
		\item $(\M, w) \models \eventuallyP \varphi$ iff there exists $k \ge 0$ s.t.  $(\M, S^{-k}(w)) \models \varphi$.
	\end{enumerate}
	We may also write $w\in\val\varphi$ if $(\M, w) \models\varphi$.
Bijective frames are a natural class of structures in which to interpret $\mathcal L^{\sf PLTL}$, since their logic is that of $\mathbb Z$ (studied in e.g.~\cite{degola16a}).
A bijection is needed for the logic to be invariant under swapping `past' and `future', since in this case $S$ and $S^{-1}$ are both functions.
However, $\mathcal L^{\sf PLTL}$ does not have the finite model property for this class.

\begin{lemma}
$\mathcal L^{\sf PLTL}$ does not enjoy the finite model property within the class of bijective frames.
\end{lemma}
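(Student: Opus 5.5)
We will exhibit a single satisfiable formula of $\mathcal L^{\sf PLTL}$ that has no finite bijective model. In a finite bijective frame $(X,S)$, $S$ is a permutation of a finite set, so every point lies on a finite cycle: for each $w$ there is $n\geq 1$ with $S^n(w)=w$. Consequently, for every $w$, the set of points reachable in the future from $w$, $\{S^k(w):k\geq 0\}$, coincides with the set reachable in the past, $\{S^{-k}(w):k\geq 0\}$, since both equal the orbit of $w$. Hence on finite bijective frames the formulas $\eventuallyF\varphi$ and $\eventuallyP\varphi$ are equivalent at every point. The plan is to exploit this by writing a formula that forces some property to occur in the past but never in the future.

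Take
\[
\varphi \;:=\; \eventuallyP p \wedge \neg\eventuallyF p .
\]
First we check that $\varphi$ is satisfiable. Consider the bijective frame $(\mathbb Z, S)$ with $S(n)=n+1$, and set $\val p=\{-1\}$. At the point $0$, the point $S^{-1}(0)=-1$ satisfies $p$, so $\eventuallyP p$ holds. On the other hand, $S^k(0)=k\geq 0$ for every $k\geq 0$, and no such point lies in $\val p$, so $\neg\eventuallyF p$ holds. Thus $(\M,0)\models\varphi$.

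Next we show that $\varphi$ fails everywhere on any finite bijective model. Suppose $(\M,w)\models\eventuallyP p$, so $S^{-k}(w)\in\val p$ for some $k\geq 0$. Let $n\geq 1$ satisfy $S^n(w)=w$; such an $n$ exists because the sequence $S^0(w),S^1(w),\ldots$ in the finite set $X$ must repeat, and injectivity of $S$ forces the first repetition to return to $w$. Choose $m$ with $mn\geq k$. Then
\[
S^{-k}(w)=S^{mn-k}(S^{-mn}(w))=S^{mn-k}(w),
\]
with $mn-k\geq 0$, so $\eventuallyF p$ holds at $w$, contradicting $\neg\eventuallyF p$.

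Therefore $\varphi$ is satisfiable in some bijective model but in no finite one, which establishes the lemma. The only step requiring care is the cycle argument in the finite case, and it reduces to injectivity of $S$ together with the pigeonhole principle.
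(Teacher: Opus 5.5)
Your proof is correct and follows the paper's approach: the paper uses the mirror-image formula $\eventuallyF q\wedge\neg\eventuallyP q$, satisfied at $0$ in $(\mathbb Z,S)$ with $S(n)=n+1$. On finite bijective models it uses the same periodicity argument ($S^n(x_0)=x_0$) to turn a future witness into a past one, whereas you turn a past witness into a future one. Your derivation of the period from injectivity and the pigeonhole principle is, if anything, more explicit than the paper's sketch.
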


\begin{proof} Let us consider the formula $\varphi =  \eventuallyF q   \wedge \neg \eventuallyP q $. We will show that $\varphi$ is satisfiable in an infinite (bijective) model but it is unsatisfiable for the class of finite bijective models.

For satisfaction, we consider a model $\mathcal M = (\mathbb Z,S,\val\cdot)$ where $S(x)=x+1$ and $\val q =\{1\}$.
It is then easy to see that $(\mathcal M,0)\models \eventuallyF q   \wedge \neg \eventuallyP q $.

Now, suppose that $\mathcal M=(X,S,\val\cdot) $ is a finite bijective model and fix $x_0\in X$.
Since $S$ is a bijection and $X$ is finite, there must be $n>0$ such that $S^n(x_0) =x_0 $.
If $x_0$ satisfies $\varphi$, then $(\mathcal M,x_0)\models \eventuallyF q $, which means that for some $k$, $(\mathcal M,S^k(x_0))\models \eventuallyF q $, and we may take $k<n$ since otherwise we may replace $k$ by $k-n$.
But then, $(\mathcal M,S^{k-n}(x_0))\models q $ and $k-n<0$, witnessing that $(\mathcal M, x_0)\models \eventuallyP q $: a contradiction.
\end{proof}

Let $a,b$ be agents (Alice and Bob) and ${\tt whole}$ be a designated atom.
For $\iota\in \{a,b\}$, define a formula

\begin{eqnarray*}
{\tt Two}_\iota &:= & (({\tt whole} \to ([\iota]\neg {\tt whole} \wedge \langle \iota \rangle \neg {\tt whole}))\\
				 &&   \wedge (\neg {\tt whole}\to ([\iota] {\tt whole} \wedge \langle \iota \rangle  {\tt whole}))).	
\end{eqnarray*}
\noindent We then let ${\tt Two}:= {\tt Two}_a\wedge {\tt Two}_b $.

Given formulas $\varphi,\psi \in\mathcal L^*$, suppose that $\varphi\wedge C_{ab}\psi $ is satisfied on some point $x_0 \in X$, where $\varphi$ is a formula where only the agents $a$ and $b$ may appear.
Then, by the topological characterisation of common knowledge, $\val{C_{ab}\psi}$ is both $a$-open and $b$-open, hence if we let $\mathcal M' = (X',d'_a,d'_b,\val\cdot')$ be the restriction of $\mathcal M$ to $\val{C_{ab}\psi}$, in the sense that $X'= \val{C_{ab}\psi}$, $d'_\iota(Y) = d_\iota(Y\cap X') \cap X'$ and $ \val p'=\val p\cap X'$.
We see that also $(\mathcal M',x_0)\models\varphi$ and, moreover, $\mathcal M'\models \psi$; here we are using the well-known fact that restricting models to an open subset preserves truth of formulas \cite{GabelaiaMasters}.
In other words, satisfiability of $\varphi\wedge C_{ab}\psi $ is equivalent to satisfiability of $\varphi$ over the class of models which validate $\psi$.

Our reduction will concern formulas of the form $\varphi\wedge C_{ab}\tt Two$, hence we may work over models validating $\tt Two$.
In the sequel, we fix a bitopological model $\mathcal M=(X,d_a,d_b,\val\cdot)$ such that $\mathcal M\models \tt Two$; additional agents may also be included in the model, but we will work with formulas involving only $a$ and $b$, and these are not affected by other agents' derivative operators.
However, our results do require at least two agents.
For $ \iota \in \{a,b\}$, we define $x\sim_\iota y$ if $x,y$ are not distinguishable by $T_\iota$, i.e.~if the two belong to exactly the same open sets for $\iota$.

\begin{lemma}\label{lemSequence}
For $\iota\in \{a,b\}$ and $x\in X$, there is a unique $y:=S_\iota(x)$ such that $y\neq x$ and $\{x,y\}$ is an atomic $\iota$-neighbourhood.
\end{lemma}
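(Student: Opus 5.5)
The plan is to read off the local topology of $\mathcal T_\iota$ around $x$ directly from ${\tt Two}_\iota$, using that $d_\iota$ is the Cantor derivative. Under this semantics, $x\in\val{\langle\iota\rangle\theta}$ means that every $\iota$-neighbourhood of $x$ meets $\val\theta-\{x\}$. Dually, $x\in\val{[\iota]\theta}$ means that some $\iota$-open $U\ni x$ satisfies $U-\{x\}\subseteq\val\theta$. Since ${\tt Two}_\iota$ is symmetric under swapping ${\tt whole}$ and $\neg{\tt whole}$, I would assume without loss of generality that $x\in\val{\tt whole}$.

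\textbf{Existence.} By $[\iota]\neg{\tt whole}$ at $x$, I fix an $\iota$-open $U\ni x$ with $U-\{x\}\subseteq\val{\neg{\tt whole}}$. By $\langle\iota\rangle\neg{\tt whole}$ at $x$, the set $U-\{x\}$ is nonempty, so I pick $y\in U-\{x\}$; then $y\notin\val{\tt whole}$. Applying ${\tt Two}_\iota$ at $y$, the conjunct $[\iota]{\tt whole}$ gives an $\iota$-open $V\ni y$ with $V-\{y\}\subseteq\val{\tt whole}$. Put $W=U\cap V$, which is open and contains $y$. Every point of $W$ other than $x$ lies outside ${\tt whole}$, and every point other than $y$ lies in ${\tt whole}$, so $W\subseteq\{x,y\}$. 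Since $y\in\val{\langle\iota\rangle{\tt whole}}$, the neighbourhood $W$ of $y$ must contain a ${\tt whole}$-point other than $y$, so $x\in W$ and hence $W=\{x,y\}$ is $\iota$-open.

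\textbf{Atomicity.} The only candidate proper nonempty open subsets of $W$ are $\{x\}$ and $\{y\}$. If $\{x\}$ were open, then $x$ would not be a limit point of $\val{\neg{\tt whole}}$, contradicting $\langle\iota\rangle\neg{\tt whole}$ at $x$. Symmetrically, $\{y\}$ open contradicts $\langle\iota\rangle{\tt whole}$ at $y$. Hence $\{x,y\}$ is an atomic $\iota$-open set. It is in fact the least neighbourhood of $x$: for any open $O\ni x$, the set $O\cap\{x,y\}$ is open and contains $x$, so it is not $\{x\}$ and must equal $\{x,y\}$.

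\textbf{Uniqueness.} Suppose $y'\neq x$ and $\{x,y'\}$ is also an $\iota$-open neighbourhood of $x$. Then $\{x,y\}\cap\{x,y'\}$ is open and contains $x$. If $y'\neq y$, this intersection is $\{x\}$, which we just showed is not open; therefore $y'=y$.

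The only delicate step is the existence argument, namely showing that $x$ itself lies in the neighbourhood $W$ of $y$. This is where the diamond conjunct $\langle\iota\rangle{\tt whole}$ at $y$ is needed, and not merely the box conjuncts. Everything else consists of routine unpacking of the Cantor derivative.
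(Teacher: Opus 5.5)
Your proof is correct and follows the paper's argument step for step. You assume $x\in\val{\tt whole}$ without loss of generality, intersect the box-witnessing neighbourhoods of $x$ and $y$, use $\langle\iota\rangle{\tt whole}$ at $y$ to force $x$ into the intersection, and use the diamond conjuncts to rule out open singletons. Your uniqueness step, using that $\{x\}$ is not open, is only a cosmetic variant of the paper's observation that two atomic open sets sharing $x$ must coincide; both rest on the same intersection argument.
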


\begin{proof}
Fix $x\in X$ and $\iota\in \{a,b\}$.
Note that the uniqueness of such a $y$ is immediate since $\{x,y\}$ and $\{x,y'\}$ both being atomic implies $y=y'$, so it remains to prove existence.

Assume that $x\in \val{\tt whole}$; the case where $x\notin\val{\tt whole}$ is symmetric.
Then, $x$ satisfies $[\iota]\neg{\tt whole}\wedge \langle \iota \rangle \neg {\tt whole} $, hence $x$ has an $\iota $-neighbourhood $U$ where every point except $x$ satisfies $\neg \tt whole$, and moreover $U$ contains at least one point $y$ satisfying $\neg {\tt whole}$.

Since $y$ also satisfies ${\tt Two}_\iota $, $y$ must also have a neighbourhood $U'$ where every point but $y$ satisfies $\tt whole$.
Now consider $U'' = U\cap U'$; $U''$ is a neighbourhood of $y$, so at least one point satisfies $ \tt whole$, but it is a subset of $U$, so the only available such point is $x$ and thus $x \in U''$.
Since no points of $U''$ but $x$ satisfy $\tt whole$ and no points of $U''$ but $y $ satisfy $\neg\tt whole$, we must have that $U''=\{x,y\}$.
We finally observe that no non-empty proper subset of $ U''$ can be open since this would lead to $\langle \iota \rangle \tt whole$ or $\langle \iota \rangle\neg \tt whole$ failing at the respective point.
Thus $y$ has all required properties.
\end{proof}

Note that $S_\iota^2$ is the identity since the conditions are symmetric on $x$ and $y$, so $S_\iota$ is a bijection.
We thus obtain the following.

\begin{lemma}\label{lemBij}
The functions $S_a,S_b\colon X\to X$ are bijections, as is $S:= S_b\circ S_a$.
\end{lemma}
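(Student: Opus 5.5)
Our plan is to show that each $S_\iota$ is an involution; bijectivity of the composite then follows formally.

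Fix $\iota\in\{a,b\}$ and $x\in X$, and put $y=S_\iota(x)$. By Lemma~\ref{lemSequence}, $y\neq x$ and $\{x,y\}$ is an atomic $\iota$-neighbourhood. The conditions defining $S_\iota$ are symmetric in the two points. Read from $y$'s side, they say that $x\neq y$ and $\{y,x\}$ is an atomic $\iota$-neighbourhood of $y$. Since $\{y,x\}$ contains $y$ and is atomic, it is a neighbourhood of $y$ of the required kind, so $x$ witnesses the existence clause of Lemma~\ref{lemSequence} for $y$.

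By the uniqueness clause of that lemma, $S_\iota(y)=x$, that is, $S_\iota(S_\iota(x))=x$. Hence $S_\iota\circ S_\iota=\mathrm{id}_X$. A map that is its own two-sided inverse is a bijection, since it is injective (apply $S_\iota$ to both sides of $S_\iota(x)=S_\iota(x')$) and surjective (every $x$ equals $S_\iota(S_\iota(x))$). This covers $S_a$ and $S_b$.

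Finally, $S=S_b\circ S_a$ is a composition of bijections, hence a bijection, with explicit inverse $S^{-1}=S_a\circ S_b$. One checks this directly: $S_a S_b S_b S_a=S_a S_a=\mathrm{id}$, and symmetrically in the other order.

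The only point needing care is the symmetry step: we must confirm that ``$\{x,y\}$ is an atomic $\iota$-neighbourhood'' does not secretly privilege $x$. The definition of an atomic open set refers only to the set, namely that it is a non-empty open set with no open set strictly between it and $\varnothing$. Being a neighbourhood of $y$ only requires $y\in\{x,y\}$. So the notion is genuinely symmetric, and no further obstacle arises.
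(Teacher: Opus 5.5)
Your proposal is correct and follows essentially the same route as the paper. The paper likewise notes that $S_\iota^2$ is the identity because the defining conditions are symmetric in $x$ and $y$ (with uniqueness from Lemma~\ref{lemSequence}), so each $S_\iota$, and hence the composite $S=S_b\circ S_a$, is a bijection; you simply spell out the symmetry check and the explicit inverse $S_a\circ S_b$ in more detail.
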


With this, we may define a translation $\cdot^{\rm top}\colon\mathcal L^{\sf PLTL}\to\mathcal L^*$ as follows:
\begin{itemize}

\item $({\sf X}\varphi)^{\rm top} = \langle a;b\rangle \varphi^{\rm top}$

\item $({\sf Y}\varphi)^{\rm top} = \langle b;a\rangle \varphi^{\rm top}$

\item $({\sf F}\varphi)^{\rm top} = \langle (a;b)^* \rangle \varphi^{\rm top}$

\item $({\sf P}\varphi)^{\rm top} = \langle (b;a)^* \rangle \varphi^{\rm top}$

\end{itemize}
and letting $\cdot^{\rm top}$ fix atoms and commute with Booleans.

\begin{lemma}\label{lemTrans}
Let $\mathcal M=(X, d_a, d_b,\val\cdot)$ be a derivative model with $\mathcal M\models \tt Two$.
Let $S:=S_b\circ S_a$ and define $\mathcal M^{\sf PLTL} := (X, S,\val\cdot)$.

Then, for any $x\in X$, $(\mathcal M,x)\models \varphi^{\rm top} $ iff $(\mathcal M^{\sf PLTL},x)\models \varphi $.
\end{lemma}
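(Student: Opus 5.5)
The plan is to first compute the semantic operators $\val a,\val b$ explicitly in terms of the bijections $S_a,S_b$ of Lemma~\ref{lemBij}, then lift this to the compound programs occurring in the translation, and finally prove the equivalence by a routine induction on $\varphi$.

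\textbf{Step 1 (key step).} I will show that for $\iota\in\{a,b\}$ and every $Y\subseteq X$,
\[d_\iota(Y)=\{x\in X : S_\iota(x)\in Y\}=S_\iota^{-1}(Y).\]
By Lemma~\ref{lemSequence}, $U_x:=\{x,S_\iota(x)\}$ is an atomic $\iota$-open set containing $x$. I claim that $U_x$ is in fact the least $\iota$-neighbourhood of $x$. Indeed, if $V$ is any $\iota$-neighbourhood of $x$, then $V\cap U_x$ is open, non-empty and contained in $U_x$. By atomicity, $V\cap U_x=U_x$, so $U_x\subseteq V$.

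Hence, for the Cantor derivative, $x\in d_\iota(Y)$ iff $U_x\setminus\{x\}=\{S_\iota(x)\}$ meets $Y$, i.e.\ iff $S_\iota(x)\in Y$. It is worth noting that the closure reading is excluded anyway. At a point $x\in\val{\tt whole}$, the conjunct $[\iota]\neg{\tt whole}$ of ${\tt Two}_\iota$ would fail under closure semantics, so the derivative reading is the one in force. I expect this step to be the main obstacle, since it is the only place where topology genuinely enters; everything afterwards is set-theoretic bookkeeping.

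\textbf{Step 2.} From Step 1 and the clause for composition, $\val{a;b}(Y)=d_a(d_b(Y))=\{x: S_b(S_a(x))\in Y\}=S^{-1}(Y)$. Similarly, $\val{b;a}(Y)=(S_a\circ S_b)^{-1}(Y)$. Since $S_a,S_b$ are involutions, $S_a\circ S_b=S^{-1}$, so $\val{b;a}(Y)=S(Y)=\{x : S^{-1}(x)\in Y\}$.

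For iteration, I will show $\val{(a;b)^*}(Y)=Z_0:=\bigcup_{k\ge0}S^{-k}(Y)$, i.e.\ the set of $x$ with $S^k(x)\in Y$ for some $k\ge 0$. First, $Z_0\supseteq Y$ and $S^{-1}(Z_0)=\bigcup_{k\ge1}S^{-k}(Y)\subseteq Z_0$, so $Z_0$ is among the sets being intersected, and hence $\val{(a;b)^*}(Y)\subseteq Z_0$. Conversely, for any $Z$ with $Y\cup S^{-1}(Z)\subseteq Z$, induction on $k$ gives $S^{-k}(Y)\subseteq Z$, because $S^{-(k+1)}(Y)=S^{-1}(S^{-k}(Y))\subseteq S^{-1}(Z)\subseteq Z$. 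This gives $Z_0\subseteq\val{(a;b)^*}(Y)$. The same argument with $S^{-1}$ in place of $S$ yields $\val{(b;a)^*}(Y)=\{x:\exists k\ge0\ S^{-k}(x)\in Y\}$.

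\textbf{Step 3.} I will prove $\val{\varphi^{\rm top}}_{\mathcal M}=\val{\varphi}_{\mathcal M^{\sf PLTL}}$ by induction on $\varphi$. Atoms hold because the valuations coincide, and the Boolean cases are immediate. For the modal cases, $\val{(\nx\psi)^{\rm top}}=\val{a;b}(\val{\psi^{\rm top}})$, which by Step 2 and the induction hypothesis equals $\{x: S(x)\in\val\psi\}=\val{\nx\psi}$. The cases $\y$, $\eventuallyF$ and $\eventuallyP$ follow in exactly the same way from the remaining identities of Step 2, matching the satisfaction clauses of $\sf PLTL$ verbatim.
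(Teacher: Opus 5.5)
Your proof is correct and follows the same route as the paper, whose proof is just an induction on $\varphi$ checking that the semantic clauses match their translations. You make explicit what that check requires. First, $\{x,S_\iota(x)\}$ is the least $\iota$-neighbourhood of $x$, so $d_\iota(Y)=\{x : S_\iota(x)\in Y\}$. Second, the least-fixed-point clause for $(a;b)^*$ and $(b;a)^*$ produces exactly the $\exists k\geq 0$ quantification in the clauses for $\eventuallyF$ and $\eventuallyP$.
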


\begin{proof}
Induction on $\varphi$ checking that in each case, the semantic clauses coincide with their translation.
\end{proof}

\begin{lemma}\label{lemSat}
If $\varphi$ is satisfiable on a bijective model $\mathcal M$ then $\varphi^{\rm top}$ is satisfied on a bitopological derivative model $\mathcal M^{\rm top}$ such that $\mathcal M^{\rm top}\models\tt Two$.
Moreover, $\mathcal M^{\rm top}$ may be chosen so that both $\mathcal T_a$ and $\mathcal T_b$ are monadic.
\end{lemma}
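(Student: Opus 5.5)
Given a bijective model $\mathcal M=(X,S,\val\cdot)$ satisfying $\varphi$ at $x_0$, we build $\mathcal M^{\rm top}$ by doubling each point. Let $X^{\rm top}:=X\times\{0,1\}$. We read $(x,0)$ as the `whole' copy of $x$ and $(x,1)$ as an intermediate `half-step' between $x$ and $S(x)$. Accordingly, set $\val{\tt whole}^{\rm top}:=X\times\{0\}$ and $\val p^{\rm top}:=\val p\times\{0\}$ for every other atom $p$. We then define two partitions of $X^{\rm top}$ into two-element cells. For Alice, the cells are $A_x:=\{(x,0),(x,1)\}$. For Bob, the cells are $B_x:=\{(x,1),(S(x),0)\}$. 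Since $S$ is a bijection, each point lies in exactly one $B$-cell, so both families really are partitions. We let $\mathcal T_a$ and $\mathcal T_b$ be the topologies generated by these cells. Each is monadic (every open set is a union of cells, hence closed), and we equip both with the Cantor derivative. Concretely, in relational form $d_\iota(Y)$ consists of the points whose \emph{other} cell-mate lies in $Y$; this is the irreflexive equivalence-minus-identity relation, as in Example~\ref{AlexDerivative}.

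We first check $\mathcal M^{\rm top}\models\tt Two$. Each point's unique $\iota$-successor is its cell-mate, and every cell contains exactly one point of $X\times\{0\}$ and one of $X\times\{1\}$. So at a $\tt whole$ point, both $[\iota]\neg{\tt whole}$ and $\langle\iota\rangle\neg{\tt whole}$ hold, and symmetrically at non-$\tt whole$ points. In the notation of Lemma~\ref{lemSequence}, $S_a$ and $S_b$ are simply the cell-mate maps. Hence $S_b(S_a((x,0)))=S_b((x,1))=(S(x),0)$, so the map $S_b\circ S_a$ restricted to $X\times\{0\}$ is a copy of $S$.

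Now apply Lemma~\ref{lemTrans} to $\mathcal M^{\rm top}$, giving the bijective model $(\mathcal M^{\rm top})^{\sf PLTL}=(X^{\rm top},S_b\circ S_a,\val\cdot^{\rm top})$. This yields $(\mathcal M^{\rm top},(x_0,0))\models\varphi^{\rm top}$ iff $(\mathcal M^{\rm top})^{\sf PLTL}$ satisfies $\varphi$ at $(x_0,0)$. It remains to compare this with $\mathcal M$. Since $S_b\circ S_a$ maps $X\times\{0\}$ to itself, $X\times\{0\}$ is invariant under it and its inverse. On that set, the map $x\mapsto(x,0)$ is an isomorphism of bijective models between $\mathcal M$ and the restriction of $(\mathcal M^{\rm top})^{\sf PLTL}$ to $X\times\{0\}$, because $\val p^{\rm top}$ was defined as $\val p\times\{0\}$. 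Truth of $\mathcal L^{\sf PLTL}$ formulas depends only on the orbit of the point under the bijection. So an easy induction on $\varphi$ gives $(\mathcal M,x)\models\varphi$ iff $((\mathcal M^{\rm top})^{\sf PLTL},(x,0))\models\varphi$, and hence $(\mathcal M^{\rm top},(x_0,0))\models\varphi^{\rm top}$.

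The step needing most care is the check that the Cantor derivative of these monadic topologies matches the relational reading. The $\tt Two$ formula demands $\langle\iota\rangle$-witnesses other than the point itself, and the translation of ${\sf F}$ via $(a;b)^*$ must compute as iterated application of $S_b\circ S_a$. Both hold because each cell $\{u,v\}$ is an atomic open set. Hence for $u$ in that cell and any $Y$, we have $u\in d_\iota(Y)$ iff $v\in Y$, so $d_\iota(Y)=S_\iota^{-1}[Y]$. Then $\val{(a;b)^*}(Y)$ is the least superset of $Y$ closed under preimages of $S_b\circ S_a$, which is exactly $\bigcup_{k\ge0}(S_b\circ S_a)^{-k}[Y]$, as Lemma~\ref{lemTrans} requires.
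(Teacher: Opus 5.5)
Your proof is correct, and it takes a more direct route than the paper's. The paper first unwinds the bijective model to $(\mathbb Z, n\mapsto n+1)$, a step it calls standard and does not spell out. It then doubles this to the concrete space $\frac12\mathbb Z$, with $a$-cells $\{n,n+\frac12\}$ and $b$-cells $\{n-\frac12,n\}$. You skip the unwinding and double an arbitrary bijective model directly, twisting Bob's partition by $S$ through the cells $\{(x,1),(S(x),0)\}$.

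The paper's space is exactly your construction at $X=\mathbb Z$, identifying $(n,0)$ with $n$ and $(n,1)$ with $n+\frac12$. Your observation that $S_b\circ S_a$ leaves $X\times\{0\}$ invariant, while acting as a copy of $S^{-1}$ on $X\times\{1\}$, is the general form of the paper's remark. That remark says $S_bS_a(n+\frac12)=n-\frac12$ is harmless because half-integers lie outside the orbit of $0$.

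Your version buys two things:
\begin{itemize}
\item It is self-contained, since no unwinding argument is needed.
\item It is economical: the model has exactly $2|X|$ points, so finite bijective models go to finite monadic ones, whereas the paper's route always produces an infinite model.
\end{itemize}
In exchange, the paper gets the concrete picture of Figure~\ref{fig:ex:sat}. You also check explicitly, from atomicity of the cells, that $d_\iota(Y)=S_\iota^{-1}[Y]$ and $\val{(a;b)^*}(Y)=\bigcup_{k\ge 0}(S_b\circ S_a)^{-k}[Y]$. The paper folds this into Lemma~\ref{lemTrans}.

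There is one assumption you use silently and should state, as the paper does: \texttt{whole} must not occur in $\varphi$. Your isomorphism $x\mapsto(x,0)$ respects every atom other than \texttt{whole}, but not \texttt{whole} itself, which holds on all of $X\times\{0\}$. Without freshness the lemma is actually false. For example, $\mathtt{whole}\wedge\mathsf{X}\neg\mathtt{whole}$ is satisfiable on $\mathbb Z$. Its translation $\mathtt{whole}\wedge\langle a;b\rangle\neg\mathtt{whole}$ fails in every model of $\mathtt{Two}$, because there $S_a$ sends \texttt{whole} points to non-\texttt{whole} points and $S_b$ sends them back.
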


\begin{proof}
Using standard unwinding techniques, it is not hard to see that every $\sf PLTL$ formula satisfied on a bijective model is satisfied on $0$ on a model of the form $\mathcal M = (\mathbb Z,S,\val\cdot)$, where $S$ is the successor function.
We may moreover assume that the variable $\tt whole$ is fresh and does not appear in $\sf PLTL$ formulas.

Consider a bitopological model on $\frac 12\mathbb Z $ (i.e., all multiples of $\nicefrac 12$)
where $\val {\tt whole} =\mathbb Z $ (and $\val p$ is unchanged for other variables).
Let $\mathcal T_a$ be the topology generated by the basis $\{\{n,n+\nicefrac 12\}: n\in\mathbb Z\}$ and $\mathcal T_b$ be generated by the basis $\{\{n-\nicefrac 12,n\}: n \in\mathbb Z\}$ (see Figure~\ref{fig:ex:sat}).
This space is easily checked to be monadic for both $\mathcal T_a$ and $\mathcal T_b$.
Then, set $\mathcal M^{\rm top}:=(\frac 12 \mathbb Z,d_a,d_b,\val\cdot)$.

It is readily checked that $\mathcal M^{\rm top} \models \tt Two$ and, moreover, if $n\in \mathbb Z$ then $S(n) = S_bS_a(n)$ (note that $S_bS_a (n+\nicefrac 12)=n-\nicefrac 12$, but this does not matter since such points are not in the orbit of $0$).
By Lemma~\ref{lemTrans}, we have that $(\mathcal M^{\rm top},0)\models\varphi^{\rm top}$, as desired.
\end{proof}

\begin{figure}[h!]\centering
\begin{tikzpicture}[node distance=1.1cm,minimum size=2mm]

	\node[] (a) {\tiny $\mathbf{0}$};
	\node[right of=a] (b) {\tiny $\nicefrac{1}{2}$};
	\node[right of=b] (c) {\tiny $\mathbf{1}$};
	
	\node[right of=c] (d) {\tiny $\nicefrac{3}{2}$};
	\node[right of=d] (e) {\tiny $\mathbf{2}$};
	\node[right of= e] (l) {$\cdots$};
		
	\node[left of=a] (g) {\tiny $\nicefrac{-1}{2}$};
	\node[left of=g] (h) {\tiny $\mathbf{-1}$};
	\node[left of=h] (i) {\tiny $\nicefrac{-3}{2}$};
	\node[left of=i] (j) {\tiny $\mathbf{-2}$};
	
	\node [left of= j] (m) {$\cdots$};

\path[color=red]
	(a) edge[] node [above] {\tiny $a$} (b)
	(c) edge[] node [above] {\tiny $a$} (d)
	(e) edge[] node [above] {\tiny $a$} (l)
	(h) edge[] node [above] {\tiny $a$} (g)
	(j) edge[] node [above] {\tiny $a$} (i);

\path[color=blue]
	(b) edge[] node [above] {\tiny $b$} (c)
	(d) edge[] node [above] {\tiny $b$} (e)
	(g) edge[] node [above] {\tiny $b$} (a)
	(i) edge[] node [above] {\tiny $b$} (h)
	(m) edge[] node [above] {\tiny $b$} (j);	
	
\path[->, dashed]
(a.north) edge[bend left] node [above] {\tiny $S$} (c.north)
(c.north) edge[bend left] node [above] {\tiny $S$} (e.north)
(h.north) edge[bend left] node [above] {\tiny $S$} (a.north)
(j.north) edge[bend left] node [above] {\tiny $S$} (h.north);

\path[->,dashed]
(b.south) edge[bend left] node [below] {\tiny $S$} (g.south)
(g.south) edge[bend left] node [below] {\tiny $S$} (i.south)
(i.south) edge[bend left] node [below] {\tiny $S$} (m.south)
(d.south) edge[bend left] node [below] {\tiny $S$} (b.south)
(l.south) edge[bend left] node [below] {\tiny $S$} (d.south);

\end{tikzpicture}
\caption{The bitopological representation of a bijective frame. Numbers in boldface represent the points where the variable \texttt{whole} holds.
Dashed arrows illustrate the successor function, while red and blue lines represent the equivalence relations associated to agents $a$ and $b$, respectively.}
\label{fig:ex:sat}
\end{figure}
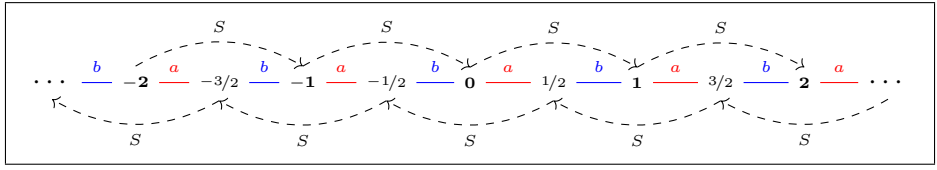

 From this, we are ready to prove that topological $\sf PDL$ does not have the finite model property.

\begin{theorem}
Suppose that $a,b $ are distinct elements of $\mathbb A$.
Let $\mathcal C$ be a class of polyderivative spaces of the form $(X,(d_a)_{a\in\mathbb A})$, where $d_a$ and $d_b$ are monadic.
Then, $\mathcal L^*$ does not have the FMP for the class of Cantor derivative spaces based on $\mathcal C$.
\end{theorem}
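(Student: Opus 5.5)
The plan is to take the formula $\varphi = \eventuallyF q\wedge\neg\eventuallyP q$ from the lemma showing that $\mathcal L^{\sf PLTL}$ lacks the FMP, and consider
\[\theta := \varphi^{\rm top}\wedge C_{ab}{\tt Two},\]
where ${\tt whole}$ is distinct from $q$. We show that $\theta$ is satisfiable over Cantor derivative spaces based on $\mathcal C$, but not on any finite one.

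\emph{Satisfiability.} By the earlier lemma, $\varphi$ holds at $0$ in $(\mathbb Z,S,\val\cdot)$ with $S(x)=x+1$. Lemma~\ref{lemSat} then gives a bitopological derivative model $\mathcal M^{\rm top}$ on $\frac12\mathbb Z$ with $\mathcal T_a,\mathcal T_b$ monadic, $\mathcal M^{\rm top}\models{\tt Two}$, and $(\mathcal M^{\rm top},0)\models\varphi^{\rm top}$. Since ${\tt Two}$ is valid on the whole model, $\val{C_{ab}{\tt Two}}=X$: the whole space is open in both topologies and contained in $\val{\tt Two}$, so by the topological characterisation of common knowledge $0\in\val\theta$. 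Any further agents can be given an arbitrary monadic derivative, for instance the discrete one; this does not affect $\theta$. I assume the class $\mathcal C$ is rich enough to contain such a space, which is how the statement should be read.

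\emph{No finite model.} Suppose $\theta$ holds at $x_0$ in a finite model $\mathcal M$ over such a space. As explained before Lemma~\ref{lemSequence}, restrict $\mathcal M$ to the $a$- and $b$-open set $X'=\val{C_{ab}{\tt Two}}$, which contains $x_0$. This gives a finite model $\mathcal M'$ with $\mathcal M'\models{\tt Two}$ and $(\mathcal M',x_0)\models\varphi^{\rm top}$, since restriction to an open subset preserves truth. We must check that the restricted structure is still a Cantor derivative space of the relevant kind. Restricting monadic spaces to open (hence clopen) subsets keeps them monadic, and the Cantor derivative of the subspace is the restricted derivative $d'_\iota$. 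Lemma~\ref{lemBij} then gives bijections $S_a,S_b$ and $S=S_b\circ S_a$ on the finite set $X'$, and Lemma~\ref{lemTrans} yields $(\mathcal M'^{\sf PLTL},x_0)\models\varphi$. But $\mathcal M'^{\sf PLTL}$ is a finite bijective model, contradicting the lemma that $\varphi$ has no finite bijective model.

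\emph{Main obstacle.} The delicate step is Lemma~\ref{lemTrans}, specifically the clauses for $\langle a;b\rangle$ and $\langle (a;b)^*\rangle$ under the Cantor derivative. In a monadic space with derivative, $d_a(Y)$ consists of the points whose atomic $a$-neighbourhood $\{x,S_a(x)\}$ meets $Y$ at a point other than $x$, so $d_a(Y)=S_a^{-1}(Y)$. Hence $\val{a;b}(Y)=S_a^{-1}S_b^{-1}(Y)=S^{-1}(Y)$, and $\langle a;b\rangle$ behaves as $\nx$. The paper's convention for $S$ versus $S^{-1}$ only swaps the roles of $\nx$ and $\y$; since $\varphi$'s failure of the FMP is symmetric, this does not matter.

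For the star, $\val{\alpha^*}(Y)$ is the least fixed point containing $Y$, which for a function-induced operator is $\bigcup_{k\ge0}S^{-k}(Y)$. This requires a short argument that the union is closed under $S^{-1}$ and is contained in every such $Z$. That is exactly $\eventuallyF$, and $\eventuallyP$ is handled symmetrically.
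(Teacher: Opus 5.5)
Your proposal is correct and follows the paper's proof step by step. You use the same formula ${\sf F}q\wedge\neg{\sf P}q$, get satisfiability from Lemma~\ref{lemSat}, and refute finite models by restricting to $\val{C_{ab}{\tt Two}}$ and applying Lemmas~\ref{lemBij} and~\ref{lemTrans}. Your additional checks are all sound details the paper leaves implicit: that the restriction is again a Cantor derivative model, that $d_\iota(Y)=S_\iota^{-1}(Y)$ so $\langle a;b\rangle$ is exactly $\nx$ (so no swap of $\nx$ and ${\sf Y}$ actually occurs), and that $\mathcal C$ should be read as the full class.
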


\begin{proof}
Just choose any $\sf LTL $ formula $\varphi$ which is satisfiable, but not finitely satisfiable; for example, the aforementioned $ \varphi :={\sf F}q\wedge\neg{\sf P}q$.
By Lemma~\ref{lemSat}, $\psi := \varphi ^{\rm top} \wedge C_{ab} \tt Two$ is satisfiable on $\mathcal C$, so it remains to check that $\psi $ is not satisfiable in any finite topological model.

But if $\mathcal M$ is a model satisfying $\psi$ then by restricting to $\val{  C_{ab} \tt Two}$, we obtain a model $\mathcal M_0$ with $\mathcal M_0\models \tt Two$ and such that $\mathcal M_0$ satisfies $\varphi ^{\rm top}$.
By Lemma~\ref{lemTrans}, $\mathcal M_0^{\sf PLTL}$ satisfies $\varphi$, hence $\mathcal M_0^{\sf PLTL}$ is infinite; however, $\mathcal M_0^{\sf PLTL}$ has the same domain as $\mathcal M_0$, which is a submodel of $\mathcal M $, so the latter must be infinite.
\end{proof}

\begin{remark}
Since monadic spaces are simply the irreflexive part of $\sf S5$ models, the lack of finite model property holds for standard $\sf S5$ multiagent epistemic models, provided the logic is enriched with an irreflexive version of each agent's knowledge modality.
\end{remark}

\section{Concluding remarks}

We have introduced a polytopological version of $\sf PDL$, and showed that some variants enjoy the finite model property but the more general version does not.
The latter is proven by embedding $\sf LTL$, from which we obtain also a {\sc pspace} complexity lower bound, although this is not too surprising since the logic already extends $\sf wK4$ which is also {\sc pspace}-complete, and $\sf S5$ epistemic logic with common knowledge is {\sc exptime}-complete~\cite{Halpern1992}.
However, the lack of FMP by no means indicates its undecidability, with bijective $\sf PLTL$ itself being an example of a decidable logic with no FMP.
We conjecture that polytopological $\sf PDL$ is indeed decidable, but leave this for future inquiry.

It is worth noting that any operator of the form $\alpha^*$ behaves like a new topological closure operator, but given that agents are endowed with derivative operators, it may be more natural to consider instead the {\em weakly} transitive closure of $\alpha$, say $\alpha^\sharp$.
This in particular would yield a `Cantor derivative' version of common knowledge via $(\alpha_1\cup\ldots\cup\alpha_n)^\sharp$, which should be more expressive than the standard one and be useful in determining when coalitions of agents can gain common knowledge of the current world.
One particular challenge here is that weakly transitive closure is not $\mu$-calculus definable, so FMP and decidability results could not be obtained using the techniques of Section~\ref{SecDec}.

Similarly, we may consider the {\em join} of the topologies, which would yield a topological notion of distributed knowledge (see e.g.~\cite{BaltagTopoMult}), itself with a `closure' and `derivative' variant.
To what extent the decidability and finite model properties (or lack thereof) extends to extensions of our logic with such operators remains an interesting open problem.

\section{Acknowledgements}

We would like to reiterate our appreciation for Andreas' support and guidance and are grateful to the anonymous referees for their kind reports.

David Fernández-Duque was supported by the Spanish Ministry of Science and Innovation grant PID2023-149556NB-I00.

\bibliography{biblio}

\end{document}